\journal{Journal of Nonlinear Analysis: Hybrid Systems Templates}
\newcommand{\rtn}{\mathbb{R}}
\newcommand{\rtl}{\mathbb{L}}
\newcommand{\ntn}{\mathcal{N}}
\newtheorem{theorem}{Theorem}
\newtheorem{remark}{Remark}
\newtheorem{lemma}{Lemma}
\newtheorem{proof}{Proof}
\begin{document}

\begin{frontmatter}

\title{Edge Agreement of Second-order Multi-agent System with Dynamic Quantization via Directed Edge Laplacian}
\author[mymainaddress]{Zhiwen Zeng}

\author[mymainaddress]{Xiangke Wang \corref{mycorrespondingauthor}}
\cortext[mycorrespondingauthor]{Corresponding author}
\ead{xkwang@nudt.edu.cn}

\author[mymainaddress]{Zhiqiang Zheng}
\author[mysecondaryaddress]{Lina Zhao}

\address[mymainaddress]{College of Mechanics and Automation, National University of Defense Technology, Changsha, Hunan, China}
\address[mysecondaryaddress]{Inner Mongolia Agricultural University, Hohhot, Inner Mongolia, China}

\begin{abstract}
This work explores the edge agreement problem of second-order multi-agent system with dynamic quantization under directed communication. To begin with, by virtue of the directed edge laplacian, we derive a model reduction representation of the closed-loop multi-agent system depended on the spanning tree subgraph. Considering the limitations of the finite bandwidth channels, the quantization effects of second-order multi-agent system under directed graph are considered. Motivated by the observation that the static quantizer always lead to the practical stability rather than the asymptotic stability, the dynamic quantized communication strategy referring to the rooming in-rooming out scheme is employed. Based on the reduced model associated with the \emph{essential edge Laplacian}, the asymptotic stability of second-order multi-agent system under dynamic quantized effects with only finite quantization level can be guaranteed. Finally, simulation results are provided to verify the theoretical analysis.
\end{abstract}

\begin{keyword}
Edge agreement, directed edge Laplacian, model reduction, dynamic quantization.
\end{keyword}

\end{frontmatter}

\linenumbers

\begin{spacing}{1.5}

\section{Introduction}
The coordination control problem of multi-agent system has received increasing amounts of attention recently. Network topology and the information flow have turned out to be an important concern of such issue, as the constraints on communication have a considerable impact on the performance of multi-agent system \cite{xie2013consensus}. Early efforts on such problem focused primarily on the assumption that agents can obtain precise information through local communications as \cite{olfati2004consensus,ren2005consensus}. However, only a finite amount of information data can be transmitted among neighbors at each time instant, since the digital channels are always subject to a limited channel capacity.

To cope with the limitations of the finite bandwidth channels, information data are generally processed by quantizers. Under constrained communication, the multi-agent saddle-point problems are solved by using dual averaging method with quantized information \cite{yuan2014dual}. The spectral properties of the incidence matrix is employed to carry out the convergence analysis of multi-agent system for both the uniform quantizer and the logarithmic quantizer in \cite{dimarogonas2010stability}. Further, in \cite{lavaei2012quantized}, by using the stochastic gossiping algorithm, the explicit relationship between the convergence rate and the communication topology is revealed for the uniform quantizers. Since the above-mentioned static quantizer requires infinite quantization levels which can not be achieved by the realistic digital channels, they always lead to practical stability rather than asymptotic stability, and therefore, the dynamic quantizer with finite quantization level is more of practical significance. In \cite{carli2010quantized}, the coding/decoding strategies based on rooming in-rooming out scheme for the dynamic uniform quantizer is proposed to maintain average consensus and to reach it asymptotically. Based on dynamic encoding and decoding scheme embedded with a scaling function, \cite{li2011distributed} provided an explicit relationship of the asymptotic convergence and the network parameters, especially, the quantization level. In addition, the authors also provide a way to reduce the number of transmitting bits along each digital channel down to merely one bit by designing the control parameters. Most recently, extensions of \cite{li2011distributed} are further discussed in the view of the quantized consensus over directed networks \cite{li2013consensus,li2014quantized,zhang2014distributed}. While these methods are mainly devised for multi-agent system with first-order dynamics, it should be mentioned that second-order multi-agent system may lead to a dramatically different coordination behaviour, even when agents are coupled through similar network topology \cite{yu2010second}. To the best of authors' knowledge, there are still little works to explore the quantization effects on second-order dynamics, especially the dynamic quantization. \cite{li2012distributed} proposes a quantized-observer based encoding-decoding scheme for second-order multi-agent systems with limited information, which shows that expontional asymptotic synchronization can be achieved with 2-bit quantizer for connected graph. The rooming in-rooming out strategy is proposed to achieve asymptotic average consensus for double-integrator multi-agent system with dynamically quantized information transimission in \cite{yu2014asymptotic}. However, the above-mentioned literatures only consider the quantization effects associated with undirected graph, the scenario considering the directed graph is still very challenging, since the quantization may cause undesirable oscillating behavior under directed topology \cite{liu2012quantization}.

In this paper, we are going to deal with the challenging scenario that second-order multi-agent system with dynamic quantization under directed graph. Note that the analysis of the node agreement (consensus problem) has matured, but the work related to the edge agreement \cite{zelazo2011edge,zeng2015convergence} has not been deeply studied yet. Since the quantized measurements bring enormous challenges to the analysis of the synchronization behaviour of the second-order multi-agent system, we are going to explore more details about this term by virtue of the reduced edge agreement model. The main contributions of this paper contain third folders. Firstly, a model reduction representation of the closed-loop multi-agent system is derived based on the observation that the co-spanning tree subsystem can be served as an internal feedback. By utilizing the reduced edge agreement model, the analysis of the whole system can be extremely simplified. In addition, contrary to \cite{li2012distributed} and \cite{yu2014asymptotic}, the quantization effects of second-order multi-agent system under directed communication, rather than undirected topology, is considered. Moreover, by using the the rooming in-rooming out scheme, the asymptotic stability of second-order multi-agent system under dynamic quantized effects can be guaranteed with only finite quantization level.

The rest of the paper is organized as follows: preliminaries and some related notions are proposed in Section 2. The dynamic quantized edge agreement with second-order multi-agent system under directed graph is studied in Section 3. The simulation results are provided in Section 4, while the last section draws the conclusion.





\section{Basic Notions and Preliminary Results}\label{sec:basis}

In this section, some basic notions in graph theory and preliminary results about the synchronization of multi-agent system under quantized information are briefly introduced.

\subsection{Graph and Matrix}

In this paper, we use $\left|  \cdot  \right|$ and $\left\|  \cdot  \right\|$ to denote the Euclidean norm and 2-norm for vectors and matrices respectively. Denote by $I_n$ the identity matrix and by $\bm{0}_{n}$ the zero matrix in $\rtn ^{n\times n}$. Let $\bm{0}$ be the column vector with all zero entries. The null space of matrix $A$ is denoted by $\mathcal{N}(A)$. Let $\mathcal G = \left( {\mathcal V ,\mathcal E} \right)$ be a digraph of order $N$ specified by a node set $\mathcal V$ and an edge set $\mathcal E \subseteq  \mathcal V   \times \mathcal V$ with size $L$. The set of neighbors of node $i$ is denoted by $\ntn_i  = \left\{ {j: e_k = (j,i)  \in \mathcal E } \right\}$. The adjacency matrix of $\mathcal G$ is defined as ${A}_\mathcal{G} = \left[ {a_{ij} } \right] \in \mathbb{R}^{N \times N}$ with nonnegative adjacency elements $a_{ij} > 0 \Leftrightarrow \left( {j, i} \right) \in \varepsilon$. The degree matrix $\Delta_\mathcal{G} = \left[\Delta_{ij}\right]$ is a diagonal matrix with $\left[\Delta_{ii}\right] = \sum\nolimits_{j = 1}^N {a _{ij},i = 1,2, \cdots,N}$, and the graph Laplacian of the weighted digraph $\mathcal{G}$ is defined by $L_\mathcal{G}\left( \mathcal G \right)=\Delta_\mathcal{G}-{A}_\mathcal{G}$， whose eigenvalues will be ordered and denoted as $0 = \lambda _1 \le \lambda _2  \le  \cdots  \le \lambda _{N} $. Denote by $\mathcal{W}(\mathcal{G})$ the $L \times L$ diagonal matrix of $w_k$, for $k=1,2\cdots,L$, where $w_k$ represents the weight of $e_k = (j,i) \in \mathcal{E}$. The incidence matrix $E\left( \mathcal G \right) $ for a digraph is a $\left\{ {0, \pm 1} \right\}$-matrix with rows and columns indexed by nodes and edges of $\mathcal G$ respectively, such that for edge ${e_k}=(j,i) \in \mathcal{E}$, $\left[{E\left( \mathcal G \right)} \right]_{jk} = +1$, $\left[{E\left( \mathcal G \right)} \right]_{ik} = -1$ and $\left[{E\left(\mathcal G \right)} \right]_{lk} = 0$ for $l \ne i,j$.
The in-incidence matrix ${E_{\odot} \left( \mathcal{G} \right)} \in \rtn^{N \times L}$ is a $\{ 0, - 1\}$ matrix with rows and columns indexed by nodes and edges of $\mathcal{G}$, respectively, such that for an edge ${e_k}=(j,i) \in \mathcal{E}$,  $\left[{E_\odot \left( \mathcal G \right)} \right]_{lk} = -1$ for $l=i$, $\left[{E_\odot \left(\mathcal G \right)} \right]_{lk} = 0$ otherwise. The weighted in-incidence matrix $E_ \odot^w(\mathcal G)$ can be defined as $E_ \odot^w(\mathcal G) = {E_{\odot} \left( \mathcal{G} \right)}\mathcal{W}(\mathcal{G})$.
As thus, the graph Laplacian of $\mathcal{G}$ has the following expression \cite{zeng2015edge}:
$L_{\mathcal{G}}({\mathcal G}) = E_\odot^w({\mathcal G}) E({\mathcal G})^T$.
The weighted edge Laplacian of a directed graph $\mathcal{G}$ can be defined as \cite{zeng2015edge}
\begin{align}\label{align:edgelap}
L_e({\mathcal G})  := E({\mathcal G})^T E_\odot^w({\mathcal G}).
\end{align}
A directed path in digraph $\mathcal{G}$ is a sequence of directed edges and a directed tree is a digraph in which, for the root $i$ and any other node $j$, there is exactly one directed path from $i$ to $j$. A spanning tree $\mathcal{G}_{_\mathcal{T}} = \left( {\mathcal V ,\mathcal E_1} \right)$ of a directed graph $\mathcal G = \left( {\mathcal V ,\mathcal E} \right)$ is a directed tree formed by graph edges that connect all the nodes of the graph; a cospanning tree $\mathcal{G}_{_\mathcal{C}}= \left( {\mathcal V , \mathcal E - \mathcal E_1} \right) $ of $\mathcal{G}_{_\mathcal{T}}$ is the subgraph having all the vertices of $\mathcal G$ and exactly those edges that are not in $\mathcal{G}_{_\mathcal{T}}$. Graph $\mathcal{G}$ is called \emph{quasi-strongly connected} if and only if it has a directed spanning tree \cite{Thulasiraman:11b}.
\begin{lemma}{\cite{zeng2015edge}}\label{theorem:Laplacianeigequal}
For any directed graph $\mathcal{G}$, the graph Laplacian $L_{\mathcal{G}}({\mathcal G})$ and the edge Laplacian $L_e({\mathcal G})$ have the same nonzero eigenvalues. If $\mathcal{G}$ is quasi-strongly connected, then the edge Laplacian $L_e({\mathcal G})$ contains exactly $N-1$ nonzero eigenvalues which are all in the open right-half plane.
\end{lemma}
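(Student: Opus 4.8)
The plan is to split the statement into its two halves: the coincidence of nonzero spectra is a purely algebraic fact about products of rectangular matrices, and the sharper claim for quasi-strongly connected graphs is then obtained by transporting the classical location-and-multiplicity properties of the digraph Laplacian through that coincidence.

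\emph{Step 1 (equal nonzero spectra).} Put $A := E_\odot^w(\mathcal G)\in\rtn^{N\times L}$ and $B := E(\mathcal G)^T\in\rtn^{L\times N}$, so that the two factorizations recorded above read $L_{\mathcal G}(\mathcal G)=AB$ and $L_e(\mathcal G)=BA$. Evaluating $\det\!\left(\begin{bmatrix}\lambda I_N & A\\ B & I_L\end{bmatrix}\right)$ by its two Schur complements yields the polynomial identity $\lambda^{L}\det(\lambda I_N-AB)=\lambda^{N}\det(\lambda I_L-BA)$ (equivalently, $\begin{bmatrix}AB & 0\\ B & 0\end{bmatrix}$ is similar to $\begin{bmatrix}0_N & 0\\ B & BA\end{bmatrix}$ via conjugation by $\begin{bmatrix}I_N & A\\ 0 & I_L\end{bmatrix}$). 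Comparing characteristic polynomials shows that $AB$ and $BA$ share all nonzero eigenvalues with identical algebraic multiplicities; writing $r$ for that common count, $L_e(\mathcal G)$ then has exactly $r$ nonzero eigenvalues together with $L-r$ zero eigenvalues. This is the first sentence of the lemma.

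\emph{Step 2 (quasi-strongly connected case).} Let $\mathcal G$ be quasi-strongly connected. I would prove three things about $L_{\mathcal G}=\Delta_{\mathcal G}-A_{\mathcal G}$. (i) For any eigenpair $(\lambda,x)$ with $x\neq0$, pick $k$ with $|x_k|$ maximal; the $k$-th row of $L_{\mathcal G}x=\lambda x$ gives $|\lambda-\Delta_{kk}|\,|x_k|\le\sum_{j\neq k}a_{kj}|x_j|\le\Delta_{kk}|x_k|$, hence $|\lambda-\Delta_{kk}|\le\Delta_{kk}$; this already forces $\mathrm{Re}(\lambda)\ge0$, and when $\mathrm{Re}(\lambda)=0$ it forces $\lambda=0$, so every nonzero eigenvalue of $L_{\mathcal G}$ lies in the open right half plane. (ii) $L_{\mathcal G}\bm 1=\bm 0$ makes $0$ an eigenvalue, and it has geometric multiplicity $1$: if $x$ is real with $L_{\mathcal G}x=\bm 0$ and $S$ is the set of indices where $x$ attains its maximum, then for each $k\in S$ the identity $\sum_j a_{kj}(x_k-x_j)=0$ with all summands nonnegative shows every in-neighbour of $k$ lies in $S$; since a spanning-tree root $r$ reaches every node and $S\neq\emptyset$ is closed under in-neighbours, $r\in S$, and the identical argument for the minimum puts $r$ in the argmin set as well, so $x\equiv x_r\bm 1$. (iii) $0$ is algebraically simple: in $\det(\lambda I_N-L_{\mathcal G})$ the constant term equals $\pm\det L_{\mathcal G}=0$ while the coefficient of $\lambda^1$ is, up to sign, $\sum_i M_i$, where $M_i$ is the principal minor obtained by deleting row and column $i$; by the directed matrix--tree theorem each $M_i$ is a nonnegative weighted count of spanning arborescences rooted at $i$, and quasi-strong connectedness makes at least one of them positive, so $\lambda=0$ is a simple root of the characteristic polynomial. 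Combining (i)--(iii), $L_{\mathcal G}$ has exactly $N-1$ nonzero eigenvalues, all in the open right half plane; by Step~1 the same holds for $L_e(\mathcal G)$, its remaining $L-N+1$ eigenvalues being zero.

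\emph{Main obstacle.} The individual ingredients are classical, so most of the effort is bookkeeping; the one step that is not automatic from Gershgorin-type estimates is the \emph{algebraic} simplicity of the zero eigenvalue of $L_{\mathcal G}$, for which one genuinely needs either the directed matrix--tree theorem or a Perron--Frobenius argument on the left null space of $L_{\mathcal G}$ to exclude a nontrivial Jordan block at $0$. Once that is in hand, Step~1 transports the whole conclusion to the edge Laplacian without further work.
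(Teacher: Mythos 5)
The paper does not actually prove this lemma --- it is imported verbatim from the cited reference \cite{zeng2015edge} --- so there is no in-paper argument to compare against; judged on its own, your proof is correct and complete. Your two ingredients are exactly the standard ones for this result: the identity $\lambda^{L}\det(\lambda I_N-AB)=\lambda^{N}\det(\lambda I_L-BA)$ applied to the factorizations $L_{\mathcal G}=E_\odot^w E^T$ and $L_e=E^T E_\odot^w$ transfers the nonzero spectrum (with multiplicities), and the Gershgorin estimate plus the directed matrix--tree theorem pin down the location and the algebraic simplicity of the zero eigenvalue of $L_{\mathcal G}$ under quasi-strong connectivity; note that your step (iii) already gives algebraic multiplicity one, so the geometric-multiplicity argument in step (ii) is redundant (though a useful sanity check, and it is what one needs anyway for the companion Lemma~2 on the minimal polynomial). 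One small point worth making explicit: with this paper's convention $a_{ij}>0\Leftrightarrow(j,i)\in\mathcal E$ and $\Delta_{ii}=\sum_j a_{ij}$, the deleted minor $M_i$ counts spanning arborescences \emph{diverging from} $i$, which is precisely the object whose existence is guaranteed by quasi-strong connectivity, so the orientation conventions do line up as your argument requires.
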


\begin{lemma}{\cite{zeng2015edge}}\label{thm:zeroeigen}
Considering a quasi-strongly connected graph $\mathcal{G}$ of order $N$, the edge Laplacian $L_e({\mathcal G})$ has $L-N+1$ zero eigenvalues and zero is a simple root of the minimal polynomial of $L_e({\mathcal G})$.
\end{lemma}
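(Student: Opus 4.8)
The plan is to reduce $L_e(\mathcal{G})$ onto a spanning-tree basis and then read off both assertions from Lemma~\ref{theorem:Laplacianeigequal} together with a single rank count. Since $\mathcal{G}$ is quasi-strongly connected it contains a directed spanning tree $\mathcal{G}_{_\mathcal{T}}$; relabel the edges so that $e_1,\dots,e_{N-1}$ are exactly the tree edges. Then $E(\mathcal{G}_{_\mathcal{T}})$ has full column rank $N-1$, and because every column of any incidence matrix sums to zero, its $N-1$ independent columns span $\mathbf{1}^{\perp}$; hence each column of $E(\mathcal{G})$ is a combination of them and, matching the first $N-1$ columns, we obtain the tree/co-tree factorization $E(\mathcal{G}) = E(\mathcal{G}_{_\mathcal{T}})\,R$ with $R = [\,I_{N-1}\ \ T\,]$ for some matrix $T$, so that $R$ has full row rank $N-1$. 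I would record this factorization as a short preliminary observation.

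Substituting into \eqref{align:edgelap} gives $L_e(\mathcal{G}) = R^{T}M$, where $M := E(\mathcal{G}_{_\mathcal{T}})^{T}E_{\odot}^{w}(\mathcal{G})\in\rtn^{(N-1)\times L}$, and I introduce the $(N-1)\times(N-1)$ matrix $L_e^{\mathrm{ess}} := MR^{T}$, the \emph{essential edge Laplacian}. Since $R^{T}M$ and $MR^{T}$ have the same nonzero eigenvalues (with multiplicities) and, by Lemma~\ref{theorem:Laplacianeigequal}, $L_e(\mathcal{G}) = R^{T}M$ has exactly $N-1$ of them (all in the open right half-plane), the matrix $L_e^{\mathrm{ess}}$ — being of order $N-1$ — has $N-1$ nonzero eigenvalues and is therefore nonsingular.

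The two assertions then follow. First, $L_e(\mathcal{G})$ is $L\times L$ and by Lemma~\ref{theorem:Laplacianeigequal} has exactly $N-1$ nonzero eigenvalues, so it has exactly $L-N+1$ zero eigenvalues, counted with algebraic multiplicity. Second, the eigenvalue $0$ is a simple root of the minimal polynomial exactly when its geometric and algebraic multiplicities agree, i.e. when $\mathrm{rank}\,L_e(\mathcal{G}) = N-1$; but $\mathrm{rank}\,L_e(\mathcal{G}) = \mathrm{rank}\,R^{T}M = \mathrm{rank}\,M$ because $R^{T}$ has full column rank, while $N-1 = \mathrm{rank}\,L_e^{\mathrm{ess}} = \mathrm{rank}\,MR^{T}\le\mathrm{rank}\,M\le N-1$, so $\mathrm{rank}\,L_e(\mathcal{G}) = N-1$ as required. (Equivalently, $L_e(\mathcal{G})^{2} = R^{T}L_e^{\mathrm{ess}}M$ together with the nonsingularity of $L_e^{\mathrm{ess}}$ and the injectivity of $R^{T}$ gives $\mathrm{rank}\,L_e(\mathcal{G})^{2} = \mathrm{rank}\,L_e(\mathcal{G})$, i.e. $\mathcal{N}(L_e(\mathcal{G})^{2}) = \mathcal{N}(L_e(\mathcal{G}))$.)

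The only nontrivial step is the middle one: extracting a genuinely square, nonsingular reduced matrix $L_e^{\mathrm{ess}}$ from the rank-deficient product $E(\mathcal{G})^{T}E_{\odot}^{w}(\mathcal{G})$. This hinges on the tree/co-tree factorization $E(\mathcal{G}) = E(\mathcal{G}_{_\mathcal{T}})R$ and on recognizing $L_e(\mathcal{G})$ and $L_e^{\mathrm{ess}}$ as an $XY$/$YX$ pair, which is what lets the spectral content of Lemma~\ref{theorem:Laplacianeigequal} be transferred to the reduced matrix; once this is set up, the remaining rank bookkeeping is routine.
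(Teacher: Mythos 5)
Your proof is correct and follows essentially the same route as the paper's own development: the tree/co--tree factorization $E(\mathcal G)=E_{_{\mathcal T}}(\mathcal G)R(\mathcal G)$ is the paper's equation built from \eqref{R}, and the reduced matrix $MR^{T}$ you construct is exactly the paper's essential edge Laplacian $\hat L_e(\mathcal G)=E_{_{\mathcal T}}(\mathcal G)^{T}E_\odot^w(\mathcal G)R(\mathcal G)^{T}$, whose nonsingularity and spectral relation to $L_e(\mathcal G)$ the paper reads off from the block-triangular similarity \eqref{tran}, which gives $\lambda^{L-N+1}\det\left(\lambda I-\hat L_e(\mathcal G)\right)=0$ and, via the zero bottom block, the semisimplicity of the zero eigenvalue. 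The only cosmetic difference is that you transfer the spectrum through the $XY$/$YX$ identity for $R^{T}M$ and $MR^{T}$ instead of through the explicit similarity transformation $S_e$; both are valid, and your rank bookkeeping for the simple-root claim is sound.
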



\subsection{Multi-agent System with Dynamic Uniform Quantization}

The general networked multi-agent system is built upon a group of diffusively coupled linear systems which can be described as follows:
\begin{equation}\label{gensys}
\begin{cases}
\dot x_i(t) = Ax_i(t) + Bu_i(t) \\
y_i(t)  = Cx_i(t) + Dw_i(t)
\end{cases}
\end{equation}
where $x_i(t) \in {\mathbb{R}^n}$ represents the state, $u_i(t) \in {\mathbb{R}^m}$ the controller, $w_i(t)\in {\mathbb{R}^r}$ the exogenous disturbances, $y_i(t)\in {\mathbb{R}^l}$ the locally measured output and $A,B,C$ and $D$ are constant matrices with compatible dimensions. As known, the coupling between each networked agent can be characterised by the communication interconnection topology $\mathcal{G}$. To perform collective behaviours, the networked agents can be naturally modeled by the graph $\mathcal{G}$ with vertices being used to describe agents and the edges being used to represent communication topology. Note that, consider the limited capacity of the practical digital channels, the state values are always quantized, and only finite bits of information can be transmitted via network at each time instant. As thus, the networked multi-agent system with the quantized states can be illustrated as the block diagram in Figure \ref{quan}, in which the connection topology $\mathcal{G}$ is explicitly incorporated into the dynamical system as in \cite{zelazo2013performance}.

\begin{figure}
  \centering
  \includegraphics[width=60mm]{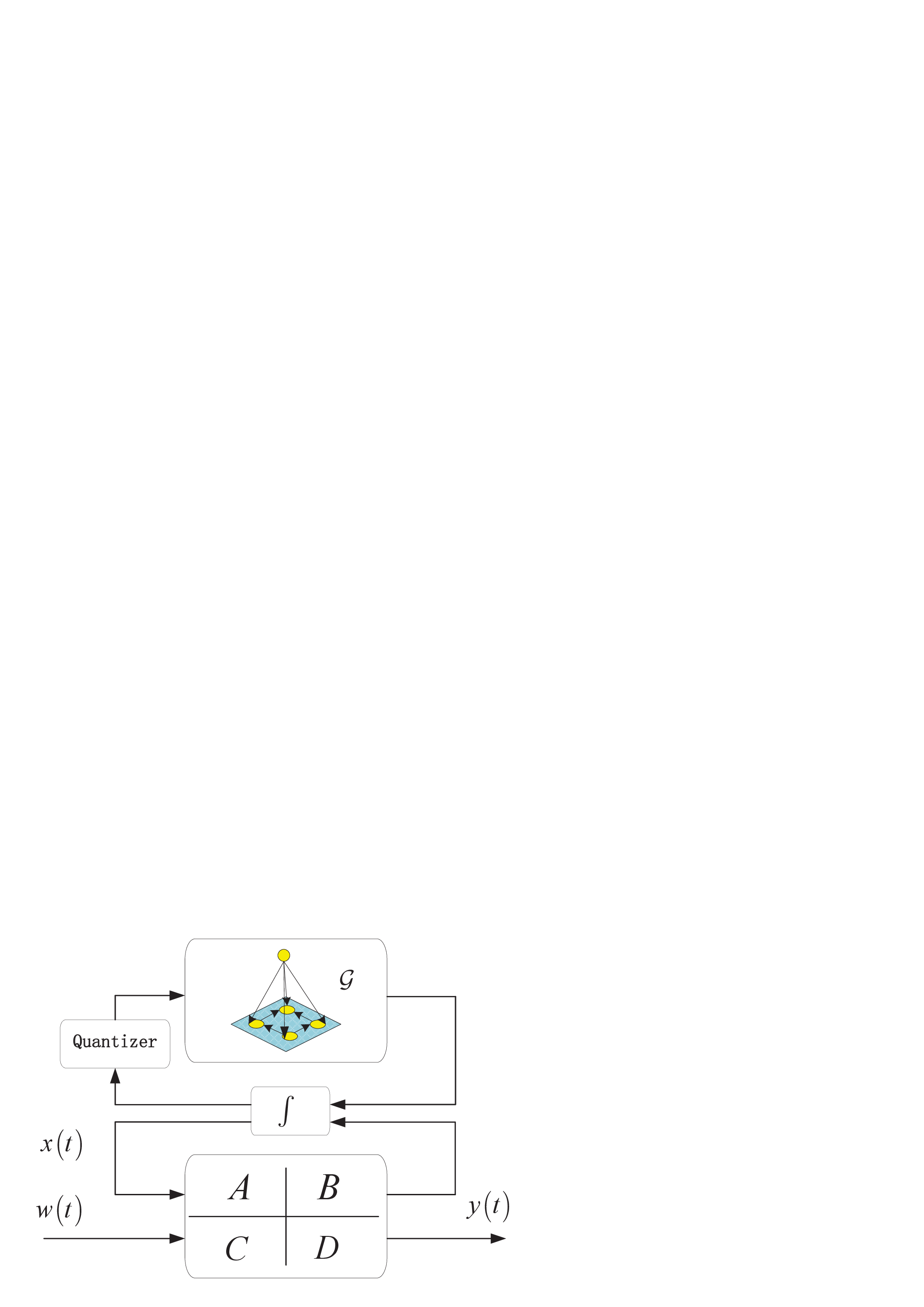}\\
  \caption{The networked multi-agent system with the quantized states under topology $\mathcal{G}$.}\label{quan}
\end{figure}

To realize the quantized communication scheme, a dynamic encoder-decoder pair is employed. Particularly, the quantized information is encoded by the sender side before transmitting and dynamically decoded at the receiver side. Our purpose for this work mainly focuses on the \emph{dynamic uniform quantization}. Suppose $\rtl$ is a finite subset of $\rtn$, then a dynamic uniform quantizer ${q_\mu }:\rtn \to \rtl$ is defined as
\begin{align}\label{dyq}
{q_\mu }\left( {{x_i}} \right)=\mu q_u({{x_i} \over {\mu}})= \left( {\left\lfloor{{{x_i}} \over {\mu \Delta }}\right\rfloor + {1 \over 2}} \right) \mu \Delta
\end{align}
where $q_u$ with quantization error $\Delta$ is a typical uniform quantizer as described in \cite{liu2012quantization}\cite{zeng2015edge}, and $\mu>0$ is adopted as a rooming variable. For a finite-level uniform quantizer with the quantization range ${\mathcal M}$, we have $\left| {{q_\mu }\left( {{x_i}} \right) - {x_i}} \right| \le \mu \Delta$ if $\left| {{x_i}} \right| \le \mu {\mathcal M}$; and $\left| {{q_\mu }\left( {{x_i}} \right)} \right| = \mu \left( {\mathcal{M} - \Delta /2} \right)$, otherwise. For vector $x$, let ${q_\mu }\left( x \right) = {\left[ {{q_\mu }\left( {{x_1}} \right),{q_\mu }\left( {{x_2}} \right), \cdots ,{q_\mu }\left( {{x_N}} \right)} \right]^T}$. Then the following error bounds are hold: $\left| {{q_\mu }\left( x \right) - x} \right| \le \sqrt N \mu \Delta$, if  $\left| {{x_i}} \right| \le \mu \mathcal M$; and  $\left| {{q_\mu }\left( x \right) - x} \right| \le \sqrt N \left( {{\rm{\mathcal M}} - \Delta} \right)$, if  $\left| {{x_i}} \right|  >  \mu \mathcal M$. 
Suppose $\tau$ is a fixed positive number, the rooming variable $\mu$ will be updated at discrete time instants and maintains a constant value on each interval $\left( {k\tau,k\tau + \tau} \right],k \in {\mathbb{Z}_{ \ge 0}}$. As thus, the evolution of the networked multi-agent system with time is discrete rather than continuous in most cases. In fact, by combining the equation \eqref{dyq} with the given system \eqref{gensys} will lead to a \emph{hybrid system}.


\section{Quantized Edge Agreement under Directed Graph}

Considering the quasi-strongly connected graph $\mathcal{G}$ and the most commonly used consensus dynamics \cite{olfati2004consensus}  described as
$\dot x =  -L_{\mathcal{G}} ({\mathcal G})\otimes {I_n} x $,
where $\otimes$ denotes the Kronecker product. Contrary to the most existing works, we study the synchronization problem from the edge perspective by using $L_e$. Following this way, we define the \emph{edge state} vector as
\begin{equation}\label{equation:edgeincidence}
 x_e \left( t \right) = E ({\mathcal G})^T\otimes {I_n} x\left( t \right)
\end{equation}
which represents the difference between the state components of two neighbouring nodes. Taking the derivative of \eqref{equation:edgeincidence} leads to
\begin{align}\label{edge:subsystem}
\dot { x}_e \left( t \right)
 =  - L_e ({\mathcal G})\otimes {I_n} { x_e} \left( t \right)
\end{align}
which is referred as \emph{edge agreement dynamics} in this paper. In comparison to the node agreement (consensus), the edge agreement, rather than requiring the convergence to the agreement subspace, desires the edge dynamics \eqref{edge:subsystem} converge to the origin, i.e., $\mathop {\lim }\nolimits_{t \to \infty } \left| {{x_e}\left( t \right)} \right| = 0$. Essentially, the evolution of an edge state depends on its current state and the states of its adjacent edges. Besides, the edge agreement implies consensus if the directed graph $\mathcal{G}$ has a spanning tree \cite{zelazo2011edge}.

\subsection{Reduced Edge Agreement Model Associated with a Spanning Tree}

 A quasi-strongly connected digraph $\mathcal{G}$ can be rewritten as a union form: $\mathcal{G} = \mathcal{G}_{_\mathcal{T}}  \cup \mathcal{G}_{_\mathcal{C}}$. In addition, according to certain permutations, the incidence matrix $E(\mathcal G)$ can always be rewritten as $ E(\mathcal G) = \left[ {\begin{matrix}{E_{_\mathcal{T} }(\mathcal G)} & {E_{_\mathcal{C}}(\mathcal G)} \end{matrix}} \right]$ as well. Since the cospanning tree edges can be constructed from the spanning tree edges via a linear transformation \cite{zelazo2011edge}, such that $E_{_{\mathcal T}}\left( {{{\mathcal G}}} \right)T({\mathcal G}) = E_{_{\mathcal C}}\left( {\mathcal G} \right)$ with $T({\mathcal G}) = {\left( {E_{_{\mathcal T}}{{\left( {{{\mathcal G}}} \right)}^T}E_{_{\mathcal T}}\left( {{{\mathcal G}}} \right)} \right)^{ - 1}}E_{_{\mathcal T}}{\left( {{{\mathcal G}}} \right)^T}E_{_{\mathcal C}}\left( {{{\mathcal G}}} \right)$ and $rank(E\left(\mathcal{G} \right)) = N-1$ from \cite{Thulasiraman:11b}. We define
\begin{align}\label{R}
R\left( {\mathcal G} \right) = \left[ {\begin{matrix}
   I & {T({\mathcal G})}  \cr
 \end{matrix}}\right]
\end{align}
and then obtain $E\left( {\mathcal G} \right) = E_{_{\mathcal T}}\left( {{{\mathcal G}}} \right)R\left( {\mathcal G} \right)$. The column space of $E(\mathcal{G})^T$ is known as the \emph{cut space} of ${\mathcal G}$ and the null space of $E({\mathcal G})$ is called the \emph{flow space}, which is the orthogonal complement of the cut space.

Before moving on, we introduce the following transformation matrix:
\begin{align*}
{S_e}\left( {\mathcal G} \right) = \left[ \begin{matrix}
   {R{{\left( {\mathcal G} \right)}^T}} & {{\theta_e}\left( {\mathcal G} \right)}  \cr
 \end{matrix}  \right]
\end{align*}
\begin{align*}
{S_e}{\left( {\mathcal G} \right)^{ - 1}} = \left[ {\begin{matrix}
   {{{\left( {R\left( {\mathcal G} \right)R{{\left( {\mathcal G} \right)}^T}} \right)}^{ - 1}}R\left( {\mathcal G} \right)}  \cr
   {{\theta_e}\left( {\mathcal G} \right)^T}  \cr
 \end{matrix}}  \right]
\end{align*}
where ${{\theta_e}\left( {\mathcal G} \right)}$ denote the orthonormal basis of the flow space, i.e., $E\left( {\mathcal G} \right){{\theta_e}\left( {\mathcal G} \right)} = 0$. Since $rank(E\left(\mathcal{G} \right)) = N-1$, one can obtain that $dim({\theta_e}\left( {\mathcal G} \right))= \mathcal{N}(E\left(\mathcal{G} \right))$ and ${\theta_e}\left( {\mathcal G} \right)^T{\theta_e}\left( {\mathcal G} \right)=I_{L-N+1}$. 

Make use of the following transformation for \eqref{edge:subsystem}:
\begin{align*}
S_e^{ - 1}{x_e(t)} =  \left( {\begin{matrix}
    {{x_{_{\mathcal T}}}(t)}  \cr
    \bm{0}  \cr
 \end{matrix} } \right)
\end{align*}
where $x_{_{\mathcal T}} = E_{_{\mathcal T}}({\mathcal G})^T x\left( t \right)$ represents the states across a specific spanning tree of $\mathcal{G}$. Then one can obtain a reduced model representation of \eqref{edge:subsystem} as follows£º
\begin{align}\label{xtsub}
\dot{x}_{_\mathcal{T}} = -E_{_\mathcal{T}} ({\mathcal G})^T E_\odot^w ({\mathcal G}){R{{\left( {\mathcal G} \right)}^T}}\otimes {I_n}{x_{_{\mathcal T}}}\left( t \right)
\end{align}
which captures the dynamical behaviour of the whole system. We refer ${{\hat L}_e(\mathcal G)} = E_{_{\mathcal T}}(\mathcal G)^T{E_ \odot^w (\mathcal G)}{R(\mathcal G)^T}$ as the \emph{essential edge Laplacian} and then we have the following lemma.

\begin{lemma}[\cite{zeng2015edge}]\label{essLe}
The essential edge Laplacian ${{\hat L}_e(\mathcal G)}$ has the same eigenvalues of $L_e(\mathcal G)$ except the zero eigenvalues.
\end{lemma}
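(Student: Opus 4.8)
The plan is to recognize the pair $\big(L_e(\mathcal G),\hat L_e(\mathcal G)\big)$ as an instance of the elementary fact that conformable products $AB$ and $BA$ share the same nonzero eigenvalues together with their algebraic multiplicities, and then to settle the zero eigenvalues using Lemmas~\ref{theorem:Laplacianeigequal} and~\ref{thm:zeroeigen}.

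First I would bring both operators to a matching factored form. Substituting $E(\mathcal G)=E_{_{\mathcal T}}(\mathcal G)R(\mathcal G)$ into \eqref{align:edgelap} gives $L_e(\mathcal G)=R(\mathcal G)^T\big(E_{_{\mathcal T}}(\mathcal G)^T E_\odot^w(\mathcal G)\big)$, whereas by definition $\hat L_e(\mathcal G)=\big(E_{_{\mathcal T}}(\mathcal G)^T E_\odot^w(\mathcal G)\big)R(\mathcal G)^T$. Writing $M:=E_{_{\mathcal T}}(\mathcal G)^T E_\odot^w(\mathcal G)\in\rtn^{(N-1)\times L}$ and noting $R(\mathcal G)^T\in\rtn^{L\times(N-1)}$, this is exactly $L_e=R^TM$ and $\hat L_e=MR^T$. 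I would then apply the Sylvester-type characteristic-polynomial identity $\lambda^{L}\det(\lambda I_{N-1}-MR^T)=\lambda^{N-1}\det(\lambda I_{L}-R^TM)$, which rearranges to
\begin{align*}
\det\!\big(\lambda I_{L}-L_e(\mathcal G)\big)=\lambda^{\,L-N+1}\det\!\big(\lambda I_{N-1}-\hat L_e(\mathcal G)\big).
\end{align*}
Hence the characteristic polynomial of $L_e$ equals $\lambda^{L-N+1}$ times that of $\hat L_e$, so every eigenvalue of $\hat L_e$ is an eigenvalue of $L_e$ and the two eigenvalue multisets can differ only by a block of zeros. To finish, Lemma~\ref{thm:zeroeigen} gives that $L_e$ has exactly $L-N+1$ zero eigenvalues and Lemma~\ref{theorem:Laplacianeigequal} gives that the remaining $N-1$ are nonzero; therefore $\det(\lambda I_{N-1}-\hat L_e(\mathcal G))$ has a nonzero constant term and its $N-1$ roots are precisely the $N-1$ nonzero eigenvalues of $L_e(\mathcal G)$, counted with multiplicity.

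An equivalent route, closer in spirit to the reduction \eqref{xtsub}, is to check directly that $S_e(\mathcal G)^{-1}L_e(\mathcal G)S_e(\mathcal G)$ is block upper triangular with diagonal blocks $\hat L_e(\mathcal G)$ and $\bm 0_{L-N+1}$; this uses $R(\mathcal G)\theta_e(\mathcal G)=0$, which follows from $E(\mathcal G)\theta_e(\mathcal G)=0$ and $E_{_{\mathcal T}}(\mathcal G)$ having full column rank, together with the stated form of $S_e(\mathcal G)^{-1}$, and the similarity then yields the eigenvalue correspondence at once. In either argument the matrix manipulations are routine; the only point requiring care is the bookkeeping of the zero eigenvalues — that is, certifying that $\hat L_e$ contributes no spurious zero eigenvalue — which is precisely where Lemmas~\ref{theorem:Laplacianeigequal} and~\ref{thm:zeroeigen} are invoked.
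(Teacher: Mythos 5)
Your proposal is correct, and in fact it contains the paper's argument as its second route: the paper (following \cite{zeng2015edge}) justifies the lemma exactly by the similarity \eqref{tran}, i.e.\ ${S_e}^{-1}L_e S_e$ is block upper triangular with diagonal blocks ${\hat L}_e$ and $\bm 0_{L-N+1}$, whence $\det(\lambda I_L - L_e)=\lambda^{L-N+1}\det(\lambda I_{N-1}-{\hat L}_e)$; your verification of the key cancellation $R(\mathcal G)\theta_e(\mathcal G)=0$ via $E(\mathcal G)\theta_e(\mathcal G)=0$ and full column rank of $E_{_{\mathcal T}}(\mathcal G)$ is sound. Your primary route is genuinely different in mechanism though not in outcome: writing $L_e=R^T M$ and ${\hat L}_e=M R^T$ with $M=E_{_{\mathcal T}}^T E_\odot^w$ and invoking the $AB$-versus-$BA$ (Sylvester/Weinstein--Aronszajn) identity reaches the same characteristic-polynomial factorization without ever constructing $S_e$ or its inverse, so it is more elementary and avoids the (mild) burden of checking that $S_e$ is invertible and that its inverse has the stated form. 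What the paper's similarity route buys in exchange is the explicit reduced structure \eqref{tran} itself, which is reused for the model reduction \eqref{xtsub}--\eqref{STsubsys}, not just the spectral statement. Your bookkeeping of the zero eigenvalues is also handled correctly: under quasi-strong connectivity, Lemma~\ref{thm:zeroeigen} pins the algebraic multiplicity of $0$ in $L_e$ at exactly $L-N+1$ and Lemma~\ref{theorem:Laplacianeigequal} supplies the $N-1$ nonzero eigenvalues, so the factorization forces $\det(\lambda I_{N-1}-{\hat L}_e)$ to be nonvanishing at $\lambda=0$, ruling out spurious zeros in ${\hat L}_e$ --- a point the paper glosses over when it asserts that ${\hat L}_e$ has ``exactly all the nonzero eigenvalues'' of $L_e$.
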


It's clear that, by applying the above-mentioned similar transformation will lead to
\begin{align}\label{tran}
{S_e}{\left( {\mathcal G} \right)^{ - 1}}{L_e(\mathcal G)}{S_e}\left( {\mathcal G} \right) = \left[ {\begin{matrix}
   {{\hat L}_e(\mathcal G)} & E_{_{\mathcal T}}^T(\mathcal G){E_ \odot^w (\mathcal G)}{{\theta_e}\left( {\mathcal G} \right)}  \cr
   \bm{0} & \bm{0} \cr
 \end{matrix}}  \right].
 \end{align}
Then the eigenvalues of the block matrix are the solution of
$$\lambda^{(L-N+1)} \det \left( {\lambda I - {{\hat L}_e}(\mathcal G)} \right) = 0$$
which shows that ${{\hat L}_e}(\mathcal G)$ has exactly all the nonzero eigenvalues of $L_e(\mathcal G)$.
Meanwhile, we can construct the following Lyapunov equation as
\begin{align}\label{edgelyap}
H{{\hat L}_e}(\mathcal G) + \hat L_e(\mathcal G)^TH = {I_{N - 1}}
\end{align}
where $H$ is a positive definite matrix.

The weighted in-incidence matrix can be represented as $ E_\odot^w({\mathcal G}) = \left[ {\begin{matrix}{ E_{\odot_\mathcal{T}}^w }({\mathcal G}) & { E_{\odot_\mathcal{C}}^w}({\mathcal G}) \cr \end{matrix}} \right]$ according to $\mathcal{G} = \mathcal{G}_{_\mathcal{T}}  \cup \mathcal{G}_{_\mathcal{C}}$. From \eqref{xtsub}, one can obtain
\begin{align*}
{{\dot x}_{_{\mathcal T}}}\left( t \right) 
= & (- {L_e^{_\mathcal{T}}} ({\mathcal G})- { E_{_\mathcal{T}} ({\mathcal G})^T{ E_{\odot_\mathcal{C}}^w}({\mathcal G})}{T({\mathcal G})^T})\otimes {I_n}{x_{_{\mathcal T}}}\left( t \right)\nonumber\\
\end{align*}
where $L_e^{_\mathcal{T}} ({\mathcal G}) = E_{_\mathcal{T}} ({\mathcal G})^T E_{\odot{_\mathcal{T}}}^w ({\mathcal G})$. Since   $E_{_{\mathcal T}}\left( {{{\mathcal G}}} \right)T({\mathcal G}) = E_{_{\mathcal C}}\left( {\mathcal G} \right)$ as mentioned before, the co-spanning tree states can be reconstructed through the matrix $T({\mathcal G})$ as
\begin{align*}
x_{_{\mathcal C}} = E_{_{\mathcal C}}({\mathcal G})^T\otimes {I_n} x\left( t \right) =  {T({\mathcal G})^T}\otimes {I_n}{x_{_{\mathcal T}}}\left( t \right).
\end{align*}
Therefore, the co-spanning tree states can be viewed as an internal feedback on the edges of the spanning tree subgraph shown in Figure \ref{figure:internalfb}.

\begin{figure}[hbtp]
\centering
{\includegraphics[height=3.0cm]{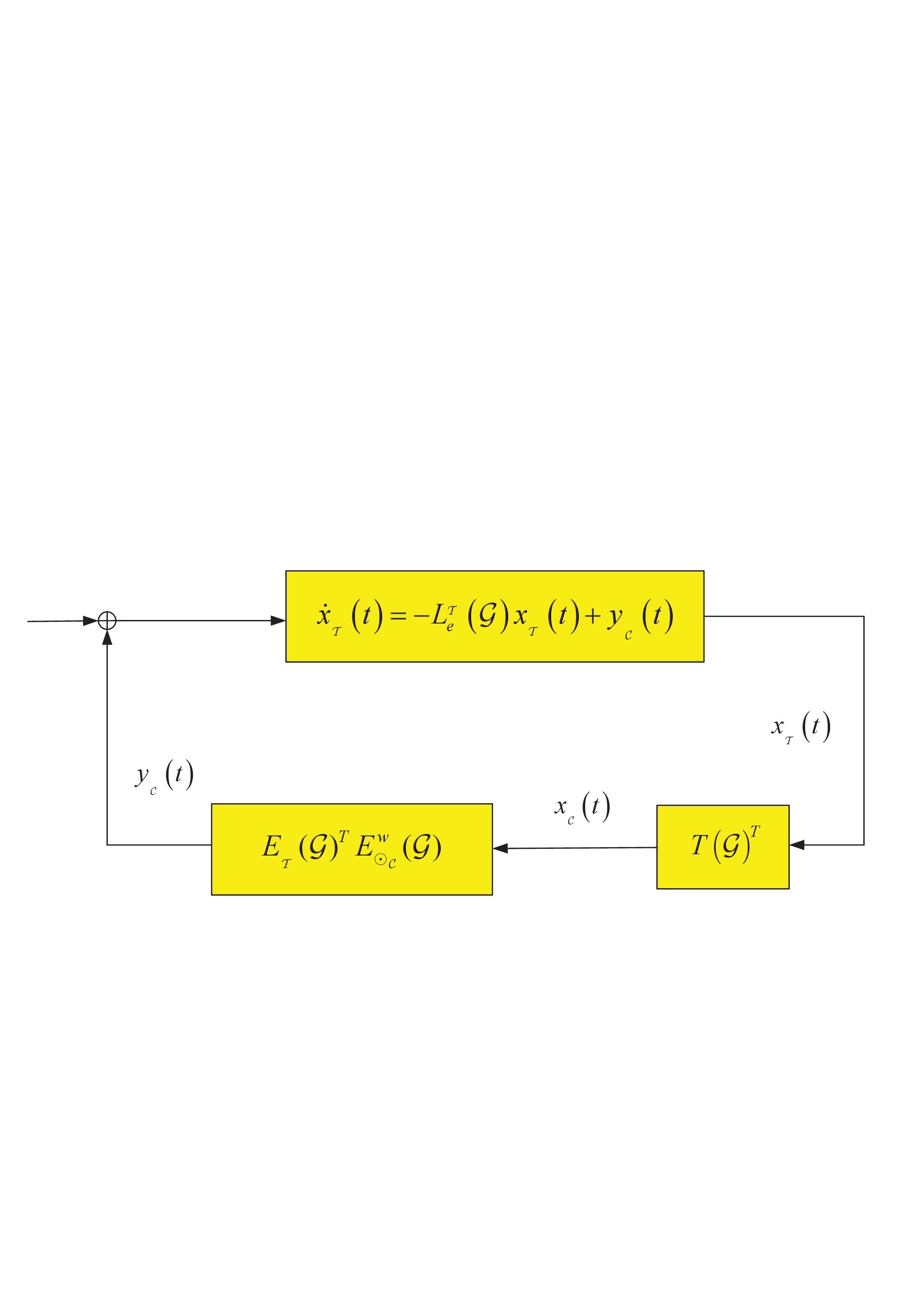}}
\caption{The co-spanning tree states can serve as an internal feedback state.}
\label{figure:internalfb}
\end{figure}

\subsection{Main Result and Stability Analysis}
We consider a group of N networked agents and the dynamics of the $i$-th agent is represented by
\begin{align}\label{dynam}
\begin{cases}
 \dot x_i(t)  = v_i(t)  \\
 \dot v_i(t)  = u_i(t)
\end{cases}
\end{align}
where $x_i(t)  \in \rtn^n$ is the position, $v_i(t)  \in \rtn^n$ is the velocity and $u _i(t)  \in \rtn^n$ is the control input. The goal for designing distributed control law $u_i(t)$ is to synchronize the velocities and positions of the $N$-networked agents.

The generally studied second-order consensus protocol proposed in \cite{yu2010second} is described as follows: $ u_i(t) =  \alpha \sum\limits_{j \in \mathcal{N}_i}^N {a_{ij}}\left( {{x_j}\left( t \right) - {x_i}\left( t \right)} \right) + \beta \sum\limits_{j \in \mathcal{N}_i}^N {{a_{ij}}\left( {{v_j}\left( t \right) - {v_i}\left( t \right)} \right)}$, for $i =1,2\cdots,N$, where $\alpha > 0$ and $\beta > 0$ are the coupling strengths. As in \cite{dimarogonas2010stability}, we assume that each agent $i$ has only quantized measurements of the relative position $ q_\mu \left( {{x_i} - {x_j}} \right)$ and velocity information $q_\mu \left( {{v_i} - {v_j}} \right)$. In that way, the protocol can be modified as
\begin{align}\label{quantizedpro}
u_i(t) = & \alpha \sum\limits_{j \in \mathcal{N}_i}^N {a_{ij}} q_\mu \left( {{x_j}\left( t \right) - {x_i}\left( t \right)} \right)  \nonumber\\
& + \beta \sum\limits_{j \in \mathcal{N}_i}^N {{a_{ij}} q_\mu  \left( {{v_j}\left( t \right) - {v_i}\left( t \right)} \right)}
\end{align}
for $i =1,2\cdots,N$.
To ease the notation, we simply use $E$, $E_ \odot^w$ and $L_e$ instead of  $E(\mathcal{G})$, $E_ \odot^w(\mathcal{G})$ and $L_e(\mathcal{G})$ in the following parts.

Considering the dynamics of the networked agents as describing in \eqref{dynam}, by directly applying the quantized protocol \eqref{quantizedpro}, we obtain
\begin{align}
\begin{cases}
{{\dot x}_i}\left( t \right) = {v_i}\left( t \right) \\
{{\dot v}_i}\left( t \right) =   \alpha \sum\limits_{j \in \mathcal{N}_i}^N {a_{ij}}q_\mu \left( {{x_j}\left( t \right) - {x_i}\left( t \right)} \right) \nonumber \\
~~~~~~~ +  \beta \sum\limits_{j \in \mathcal{N}_i}^N {{a_{ij}}q_\mu \left( {{v_j}\left( t \right) - {v_i}\left( t \right)} \right)}.
\end{cases}
\end{align}
To ease the analysis, we technically choose $\alpha = \sigma^2$ and $\beta =\sigma^3$ ($\sigma > 0$) as in \cite{hu2012second}. Then the system can be collected as
\begin{equation}\label{dyn1}
\begin{cases}
\dot x\left( t \right)  =  v\left( t \right)  \\
\dot v\left( t \right)  =   -  \sigma^2 {E_ \odot^w } \otimes {I_n}q_\mu \left( {{E^T}  \otimes {I_n}x\left( t \right)} \right)  \\
~~~~~~- \sigma^3 {E_ \odot^w } \otimes {I_n}q_\mu \left( {{E^T} \otimes {I_n}v\left( t \right)} \right)
\end{cases}
\end{equation}
with $x(t)$, $v(t)$ denoting the column stack vector of ${x_i(t)}$  and ${v_i(t)}$ respectively.



By left-multiplying $E^T\otimes {I_n}$ of both sides of \eqref{dyn1}, we obtain
\begin{equation}\label{secedgesystem}
\begin{cases}
{{\dot x}_e} = {v_e}  \\
{{\dot v}_e} =  - \sigma^2 {L_e} \otimes {I_n}q_\mu \left( {{x_e}} \right) -\sigma^3 {L_e} \otimes {I_n}q_\mu \left( {{v_e}} \right)
\end{cases}
\end{equation}
with $x_e = E^T \otimes {I_n} x$, $v_e = E^T\otimes {I_n} v$.

The quantization error satisfies $\left| e_{{x_e}}\right|, \left|e_{{v_e}} \right| \le  \sqrt {nL} {\mu}{\Delta}$, where ${e_{{x_e}}} = q_\mu\left( {{x_e}} \right) - {x_e}$ and ${e_{{v_e}}} = q_\mu\left( {{v_e}} \right) - {v_e}$. Then dynamic system \eqref{secedgesystem} can be written as the following form:
\begin{equation}\label{edgesys}
\begin{cases}
{{\dot x}_e}\left( t \right) = {v_e}\left( t \right)  \\
{{\dot v}_e}\left( t \right) =  - \sigma^2 {L_e} \otimes {I_n}{x_e} -\sigma^3 {L_e} \otimes {I_n}{v_e} \\
~~~~~~~~~~  - \sigma^2 {L_e} \otimes {I_n}{e_{{x_e}}} - \sigma^3 {L_e} \otimes {I_n}{e_{{v_e}}}.
\end{cases}
\end{equation}

Let
$z = \left[ {\begin{matrix}
   x_e^T \quad v_e^T
  \end{matrix}} \right]^T$
and
 ${\omega} = \left[ {\begin{matrix}
   {e_{x_e}}^T \quad {e_{v_e}}^T
  \end{matrix}} \right]^T$, then system \eqref{edgesys} can be recasted in a compact matrix form as follows:
\begin{align}\label{iniz}
\dot z =  {\mathcal L} \otimes {I_n}z + {{\mathcal L}_1} \otimes {I_n}{\omega}
\end{align}
with
$
 {\mathcal L} = \left[ {\begin{matrix}
   {{\bm{0}_L}} & {{I_L}}  \cr
   { - \sigma^2 {L_e}} & { - \sigma^3 {L_e}}  \cr
 \end{matrix} } \right]$ and $
{{\mathcal L}_1} = \left[ {\begin{matrix}
   {{\bm{0}_L}} & {{\bm{0}_L}}  \cr
   { - \sigma^2 {L_e}} & { - \sigma^3 {L_e}}  \cr
 \end{matrix} } \right]
$, where $\left|\omega \right|\le \sqrt {2nL} {\mu}{\Delta}$.

Since $L_e$ contains zero eigenvalues, the direct analysis of \eqref{iniz} is difficult. However, the reduced edge agreement model will be of great help in this scene. To begin with, we make use of the following transformation
\begin{align*}
S_e^{ - 1}\otimes {I_n}{x_e} =  \left( {\begin{matrix}
    {{x_{_{\mathcal T}}}}  \cr
    \bm{0}  \cr
 \end{matrix} } \right)~~
 S_e^{ - 1}\otimes {I_n}{v_e} =  \left( {\begin{matrix}
    {{v_{_{\mathcal T}}}}  \cr
    \bm{0}  \cr
 \end{matrix} } \right)
\end{align*}
\begin{align*}
S_e^{ - 1}\otimes {I_n}{e_{x_e}} =
 \left[ {\begin{matrix}
   {{{\left( {RR^T} \right)}^{ - 1}}R\otimes {I_n}{e_{{x_e}}}}  \cr
   {{\theta_e}^T\otimes {I_n}{e_{{x_e}}}}  \cr
 \end{matrix}} \right]
\end{align*}
\begin{align*}
S_e^{ - 1}\otimes {I_n}{e_{v_e}} = \left[ {\begin{matrix}
   {{{\left( {RR^T} \right)}^{ - 1}}R\otimes {I_n}{e_{{v_e}}}}  \cr
   {{\theta_e}^T\otimes {I_n}{e_{{v_e}}}}  \cr
 \end{matrix}} \right].
\end{align*}
Then we define $z_{_{\mathcal T}} = \left[ {\begin{matrix}
   x^T_{_{\mathcal T}} & v^T_{_{\mathcal T}}  \cr
  \end{matrix}} \right]^T$ and ${\hat L}_{_{\mathcal O}} = E_{_\mathcal{T}}^T E_\odot^w$. Finally, system \eqref{edgesys} can be rewritten into
\begin{align}\label{STsubsys}
{{\dot z}_{_{\mathcal T}}} =   {{\mathcal L}_{_{\mathcal T}}} \otimes {I_n}{z_{_{\mathcal T}}} + { {{\mathcal L}_{_{\mathcal T 1}}}} \otimes {I_n}{\omega}
\end{align}
with ${{\mathcal L}_{_{\mathcal T}}} = \left[ {\begin{matrix}
   {{\bm{0}_{N - 1}}} & {{I_{N - 1}}}  \cr
   { - \sigma^2 {{\hat L}_e}} & { - \sigma^3 {{\hat L}_e}}  \cr
  \end{matrix} }  \right]$, ${{\mathcal L}_{_{{\mathcal T}1}}} = \left[ {\begin{matrix}
   {{\bm{0}_{(N-1\times L) }}} & {{\bm{0}_{(N-1 \times L)}}}  \cr
   { - \sigma^2 {{\hat L}_{_{\mathcal O}} }} & { - \sigma^3 {{\hat L}_{_{\mathcal O}} }}  \cr
  \end{matrix}} \right]$.
To further explore the quantization effects on the edge agreement, we propose the following theorem.

\begin{theorem}\label{mainthm}
Considering the quasi-strongly connected digraph $\mathcal G$ associated with the edge Laplacian $L_e$ , suppose $\mathcal{Q} =  - \left( {{\mathcal P}{{\mathcal L}_{_{\mathcal T}}} + {\mathcal L}_{_{\mathcal T}}^T{\mathcal P}} \right)$ with  ${\mathcal P} = \left[ {\begin{matrix}
   {\sigma H} & {{H}}  \cr
   {{H}} & {\sigma H}  \cr
   \end{matrix}} \right]$, where $H$ is obtained by \eqref{edgelyap}. Assume that $\mathcal{M}$ is large enough compared to $\Delta$, so that we have
\begin{align}\label{cond1}
\sqrt {{{{\lambda _{\min }}\left(\mathcal{P} \right)} \over {{\lambda _{\max }}\left( \mathcal{P} \right)}}} \mathcal{M} > 2\Delta \max \left\{ {1,{{\sqrt {2nL} \left\| {{\mathcal P}{{\mathcal L}_{_{{\mathcal T}1}}} } \right\|} \over {{\lambda _{\min }}\left( \mathcal{Q} \right)}}} \right\}.
\end{align}
Then there exists a hybrid quantized feedback control policy \eqref{quantizedpro} that makes the edge agreement of \eqref{STsubsys} asymptotically achieved.
\end{theorem}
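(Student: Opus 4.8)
The plan is to carry out the whole analysis on the reduced model~\eqref{STsubsys}. Since $x_e$ and $v_e$ are fixed linear images of $x_{_{\mathcal T}}$ and $v_{_{\mathcal T}}$ under $S_e(\mathcal G)$ (indeed $x_e=R(\mathcal G)^{T}x_{_{\mathcal T}}$, and likewise for $v_e$), proving $z_{_{\mathcal T}}(t)\to\bm{0}$ is equivalent to $x_e(t),v_e(t)\to\bm{0}$, which is the edge agreement (and, as recalled after~\eqref{edge:subsystem}, also yields consensus). I would begin by recording the structural facts behind the pair $(\mathcal P,\mathcal Q)$. By Lemma~\ref{essLe} and the block form~\eqref{tran}, $\hat L_e$ carries exactly the nonzero spectrum of $L_e$, which lies in the open right half plane by Lemma~\ref{theorem:Laplacianeigequal}; hence~\eqref{edgelyap} has a unique symmetric solution $H\succ 0$. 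Observing that $\mathcal P=\left[\begin{smallmatrix}\sigma & 1\\ 1 & \sigma\end{smallmatrix}\right]\otimes H$, its eigenvalues are $(\sigma\pm1)$ times those of $H$, so $\mathcal P\succ 0$ whenever $\sigma>1$; and a direct computation using only~\eqref{edgelyap} gives
\begin{equation*}
\mathcal Q=\begin{bmatrix}\sigma^{2}I_{N-1} & \sigma^{3}I_{N-1}-\sigma H\\ \sigma^{3}I_{N-1}-\sigma H & \sigma^{4}I_{N-1}-2H\end{bmatrix},
\end{equation*}
whose Schur complement is $H\bigl((2\sigma^{2}-2)I_{N-1}-H\bigr)$, so $\mathcal Q\succ 0$ once $2\sigma^{2}-2>\lambda_{\max}(H)$. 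Since $\sigma$ is a free design parameter (it also fixes the gains $\alpha=\sigma^{2}$, $\beta=\sigma^{3}$), I would fix such a $\sigma$ and take $V(z_{_{\mathcal T}})=z_{_{\mathcal T}}^{T}(\mathcal P\otimes I_n)z_{_{\mathcal T}}$, so that $\lambda_{\min}(\mathcal P)|z_{_{\mathcal T}}|^{2}\le V\le\lambda_{\max}(\mathcal P)|z_{_{\mathcal T}}|^{2}$.

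Next, on a window $(k\tau,k\tau+\tau]$ on which $\mu\equiv\mu_{k}$ is constant \emph{and} every scalar quantizer is in range — so $|\omega|\le\sqrt{2nL}\,\mu_{k}\Delta$ as recorded below~\eqref{iniz} — differentiating $V$ along~\eqref{STsubsys} and using $\mathcal Q=-(\mathcal P\mathcal L_{_{\mathcal T}}+\mathcal L_{_{\mathcal T}}^{T}\mathcal P)$ yields
\begin{align*}
\dot V&=-\,z_{_{\mathcal T}}^{T}(\mathcal Q\otimes I_n)z_{_{\mathcal T}}+2\,z_{_{\mathcal T}}^{T}\bigl((\mathcal P\mathcal L_{_{{\mathcal T}1}})\otimes I_n\bigr)\omega\\
&\le-\,\lambda_{\min}(\mathcal Q)\,|z_{_{\mathcal T}}|^{2}+2\sqrt{2nL}\,\mu_{k}\Delta\,\|\mathcal P\mathcal L_{_{{\mathcal T}1}}\|\,|z_{_{\mathcal T}}|,
\end{align*}
which is strictly negative when $|z_{_{\mathcal T}}|>\beta\mu_{k}$, $\beta:=2\sqrt{2nL}\,\Delta\,\|\mathcal P\mathcal L_{_{{\mathcal T}1}}\|/\lambda_{\min}(\mathcal Q)$. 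I would run an invariant-set argument on $\Omega_{k}=\{z_{_{\mathcal T}}:V(z_{_{\mathcal T}})\le\lambda_{\min}(\mathcal P)(\mu_{k}\mathcal M)^{2}\}$: if $z_{_{\mathcal T}}\in\Omega_{k}$ then $|z_{_{\mathcal T}}|\le\mu_{k}\mathcal M$, hence every edge component of $x_e=R(\mathcal G)^{T}x_{_{\mathcal T}}$ and of $v_e$ stays below $\mu_{k}\mathcal M$ up to the fixed constant $\|R(\mathcal G)\|$, keeping the quantizers in range; and on $\partial\Omega_{k}$ one has $|z_{_{\mathcal T}}|\ge\sqrt{\lambda_{\min}(\mathcal P)/\lambda_{\max}(\mathcal P)}\,\mu_{k}\mathcal M$, which by~\eqref{cond1} exceeds $\beta\mu_{k}$, so $\dot V<0$ there and $\Omega_{k}$ is forward invariant over the window. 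If $\tau$ is taken large enough, the same estimate moreover drags the trajectory into the smaller set $\{V\le\lambda_{\max}(\mathcal P)(\beta'\mu_{k})^{2}\}$, $\beta'$ slightly above $\beta$, before $t=k\tau+\tau$.

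The hybrid feedback then consists of~\eqref{quantizedpro} together with a two-phase zoom rule. \emph{Zoom-out:} start $\mu$ small; while the quantizers report saturation (observable from their outputs), multiply $\mu$ by a fixed large factor at each switching instant — since over the fixed horizon $\tau$ the closed loop~\eqref{STsubsys} can expand the state by at most a bounded factor plus a term proportional to $\mu_{k}$ (the saturated output being itself $\le\mu_{k}\mathcal M$ per component), a standard argument shows that finitely many steps put the state into some $\Omega_{k_{0}}$. \emph{Zoom-in:} for $k\ge k_{0}$ set $\mu_{k+1}=\rho\mu_{k}$ with $\rho\in(0,1)$. By the previous paragraph $z_{_{\mathcal T}}(k\tau+\tau)\in\{V\le\lambda_{\max}(\mathcal P)(\beta'\mu_{k})^{2}\}$, and~\eqref{cond1} says exactly $\sqrt{\lambda_{\max}(\mathcal P)}\,\beta'<\sqrt{\lambda_{\min}(\mathcal P)}\,\mathcal M$, which leaves a nonempty interval of admissible $\rho$ for which $\lambda_{\max}(\mathcal P)(\beta'\mu_{k})^{2}\le\lambda_{\min}(\mathcal P)(\rho\mu_{k}\mathcal M)^{2}$, i.e.\ $z_{_{\mathcal T}}((k+1)\tau)\in\Omega_{k+1}$. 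Hence the inclusion $z_{_{\mathcal T}}(k\tau)\in\Omega_{k}$ propagates for all $k\ge k_{0}$, the quantizers never re-saturate, and $V(z_{_{\mathcal T}}(k\tau))\le\lambda_{\min}(\mathcal P)\mathcal M^{2}\rho^{\,2(k-k_{0})}\mu_{k_{0}}^{2}\to 0$. Since $V(z_{_{\mathcal T}}(t))\le\lambda_{\min}(\mathcal P)(\mu_{k}\mathcal M)^{2}$ uniformly over each window $[k\tau,(k+1)\tau]$, this forces $z_{_{\mathcal T}}(t)\to\bm{0}$, hence $x_e(t),v_e(t)\to\bm{0}$, which is the asserted edge agreement of~\eqref{STsubsys}.

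The Lyapunov identity and the $\dot V$ bound are routine; the real work is the quantization bookkeeping, and the sharpest point is making the inclusion $z_{_{\mathcal T}}(k\tau)\in\Omega_{k}$ survive every switching instant under a genuine contraction $\rho<1$ — this forces a balance between the continuous-time decay of $V$ over the fixed horizon $\tau$ and the discrete geometric decay of $\mu$, and it is precisely here that the slack reserved by~\eqref{cond1} (the factor $2$, the $\max\{1,\cdot\}$, and the conditioning ratio $\sqrt{\lambda_{\min}(\mathcal P)/\lambda_{\max}(\mathcal P)}$) is consumed. A secondary nuisance is propagating the graph-dependent constants coming from the coordinate change $S_e(\mathcal G)$ (equivalently $R(\mathcal G)$) through the in-range condition, which one absorbs into the standing assumption that $\mathcal M$ is large compared with $\Delta$; well-posedness of the hybrid trajectory is immediate, since $\mu$ jumps only at the instants $k\tau$ and on each window the dynamics is a linear ODE with bounded forcing.
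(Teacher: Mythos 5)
Your proposal follows essentially the same route as the paper: the same Lyapunov pair $\bigl(\mathcal P,\mathcal Q\bigr)$ with $\mathcal Q\succ 0$ secured by a Schur-complement condition on $\sigma$ (your $2\sigma^{2}-2>\lambda_{\max}(H)$ is exactly the paper's $\sigma>\sqrt{\lambda_{\max}(H)/2+1}$), the same derivative bound $\dot V\le-|z_{_{\mathcal T}}|\lambda_{\min}(\mathcal Q)\bigl(|z_{_{\mathcal T}}|-\Theta\mu\Delta\bigr)$, the same pair of invariant ellipsoids, and the same Liberzon zoom-out/zoom-in policy with a geometric contraction of $\mu$ justified by \eqref{cond1}. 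The only substantive difference is that you explicitly track the factor $\|R(\mathcal G)\|$ needed to pass from the bound on $|z_{_{\mathcal T}}|$ to the componentwise in-range condition on $x_e=R(\mathcal G)^{T}\otimes I_n\,x_{_{\mathcal T}}$ — a bookkeeping point the paper silently omits — which is a small improvement rather than a divergence.
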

\begin{proof}
By selecting
\begin{align*}
\sigma > \sqrt {{{{\lambda _{\max }(H)}} \over 2} + 1}
\end{align*}
then $\mathcal{P}$ and
\begin{align*}
\mathcal{Q} =  - \left( {{\mathcal P}{{\mathcal L}_{_{\mathcal T}}} + {\mathcal L}_{_{\mathcal T}}^T{\mathcal P}} \right) = \left[ \begin{matrix}
   {\sigma^2{I_{N - 1}}} & {{\sigma^3}{I_{N - 1}} - \sigma H}  \cr
   {{\sigma^3}{I_{N - 1}} - \sigma H} & {{\sigma^4}{I_{N - 1}}- 2H}  \cr
  \end{matrix} \right].
\end{align*}
are positive definite according to Schur complements theorem \cite{yu2010second}.

For the edge Laplacian dynamics \eqref{STsubsys}, we can choose the following Lyapunov function candidate:
\begin{align}\label{ly}
V\left( {{z_{_{\mathcal T}}}} \right) = z_{_{\mathcal T}}^T{\mathcal P} \otimes {I_n}{z_{_{\mathcal T}}}.
\end{align}


By taking the derivative of \eqref{ly} along the trajectories of \eqref{STsubsys}, we have
\begin{align}\label{dlyap}
\dot V\left( {{z_{_{\mathcal T}}}} \right) 
=  & -z_{_{\mathcal T}}^T\mathcal{Q}\otimes {I_n}{z_{_{\mathcal T}}} + z_{_{\mathcal T}}^T{\mathcal P}{{\mathcal L}_{_{{\mathcal T}1}}}\otimes {I_n}{\omega} + {\omega}^T{{\mathcal L}_{_{{\mathcal T}1}}^T}{\mathcal P}\otimes {I_n}z_{_{\mathcal T}} \nonumber\\
& \le  - \lambda_{\min } \left( \mathcal Q \right){\left| {{z_{_{_{\mathcal T}}}}} \right|^2} + 2\sqrt {2nL} {\mu \Delta}\left\| {{\mathcal P}{{\mathcal L}_{_{{\mathcal T}1}}} } \right\|\left| {{z_{_{\mathcal T}}}} \right| \nonumber \\
& = -\left| {{z_{_{_{\mathcal T}}}}} \right| \lambda_{\min } \left(\mathcal Q \right)\Bigl( \left| {{z_{_{_{\mathcal T}}}}} \right| - \Theta {\mu \Delta}  \Bigr)
\end{align}
in which $\Theta  =  2\sqrt {2nL}\left\| {{\mathcal P}{{\mathcal L}_{_{{\mathcal T}1}}} } \right\|/\lambda_{\min } \left(\mathcal Q \right) > 0$.

Based on lemma 1 of \cite{liberzon2003hybrid}, for an arbitrary $\varepsilon > 0$, we can define the ellipsoids $${R_1}: = \left\{ {{z_{_{\mathcal T}}}:z_{_{\mathcal T}}^T{\mathcal P} \otimes {I_n}{z_{_{\mathcal T}}} \le {\lambda _{\min }}\left( \mathcal{P} \right){{\mathcal M}^2}{\mu ^2}} \right\}$$ and
$${R_2}: = \left\{ {{z_{_{\mathcal T}}}:z_{_{\mathcal T}}^T{\mathcal P} \otimes {I_n}{z_{_{\mathcal T}}} \le {\lambda _{\max }}\left( \mathcal{P} \right){\Theta ^2}{\Delta ^2}{{\left( {1 + \varepsilon } \right)}^2}{\mu ^2}} \right\}.$$
According to  \eqref{dlyap}, $R_1$ and $R_2$ are invariant regions for multi-agent system \eqref{STsubsys}. With this setting, the trajectories of \eqref{STsubsys} starting in ${R_1}$ will approach $R_2$ in finite time.

Between ellipsoids ${R_1}$ and $R_2$, we have the following formula:
\begin{align*}
\mathcal{M} \mu \ge \left| z_{_{\mathcal T}} \right| \ge \left( {1 + \varepsilon } \right)\Theta \mu \Delta
\end{align*}
which implies
\begin{align}
\dot V \le  - {\lambda _{\min }}\left( \mathcal{Q} \right){\varepsilon  \over {1 + \varepsilon }}{\left| z_{_{\mathcal T}} \right|^2} \le  - {{{\lambda _{\min }}\left( \mathcal{Q} \right)} \over {\lambda_{\max} \left( \mathcal{P} \right)}}{\varepsilon  \over {1 + \varepsilon }}V.
\end{align}
Let $\alpha = { {{\lambda _{\min }}\left( \mathcal{Q} \right)} \varepsilon  / {\lambda_{\max} \left( \mathcal{P} \right)} ({1 + \varepsilon })}$, then by applying the Comparison Lemma \cite{khalil2002nonlinear}, we can provide the following estimates of the convergence rate as in \cite{yu2014asymptotic}: \begin{align*}
V\left( z_{_{\mathcal T}}(t)  \right)
& \le e^{-\alpha t} V\left( {{z_{_{\mathcal T}}}(0)}  \right)
\end{align*}
with the estimation of the upper bound of the convergence
time that starting in $R_1$ enter $R_2$ as
\begin{align}\label{T}
T = {1 \over \alpha }\ln {{{\lambda _{\min }}\left( \mathcal{P} \right){{\mathcal M}^2}} \over {{\lambda _{\max }}\left(\mathcal{P} \right){\theta ^2}{\Delta ^2}{{\left( {1 + \varepsilon } \right)}^2}}}.
\end{align}

To guarantee the asymptotic stability of the whole system, the Liberzon¡¯s design strategy \cite{liberzon2003hybrid} is employed, in which  the control scheme contains two folders: ``rooming out'' to detect the measurement of states by increasing $\mu$; ``rooming in'' to achieve more accurate quantization by decreasing $\mu$. 

\emph{Rooming out}. Firstly, we initialize $u_i = 0$ and let $\mu(0)= 1$. By increasing $\mu$ fast enough to dominate the rate of growth of $\left|e^{At}\right|$, then we can pick a time $t_0$ such that
\begin{align*}
\left| {{q_\mu }\left( {{z_{_{\mathcal T}}}\left( t_0 \right)} \right)} \right| \le \sqrt {{{{\lambda _{\min }}\left( {\mathcal P} \right)} \over {{\lambda _{\max }}\left( {\mathcal P} \right)}}} {\mathcal M}\mu \left( t_0 \right) - \Delta \mu \left( t_0 \right).
\end{align*}
Therefore, we can obtain
\begin{align*}\left| {{{z_{_{\mathcal T}}\left( {{t_0}} \right)} \over {\mu \left( {{t_0}} \right)}}} \right| \le \sqrt {{{{\lambda _{\min }}\left( {\mathcal P} \right)} \over {{\lambda _{\max }}\left( {\mathcal P} \right)}}} \mathcal{M}
\end{align*}
which implies that $z_{_{\mathcal T}}(t_0)$ belongs to the ellipsoid $R_1(\mu(t_0))$, and this event can be detected using only available quantized measurements.

\emph{Rooming in}. When the initial state is in ellipsoid $R_1$ with the initial rooming variable $\mu(t_0)$, the zooming-in phase starts with the update interval $T$. Let $\mu(t) = \mu(t_0)$ for $t \in \left[ {{t_0},{t_0} + T} \right)$, where $T$ is given by \eqref{T}. Then $x(t_0+T)$ belongs to the ellipsoid $R_2$. Let the rooming in rule is as
\begin{align}
\mu  = {\Omega ^k}{\mu _0},~\Omega  = {{\sqrt {{\lambda _{\max }}\left( \mathcal{P} \right)} \Theta \Delta \left( {1 + \varepsilon } \right)} \over {\sqrt {{\lambda _{\min }}\left( \mathcal{P} \right)} {\mathcal M}}}
\end{align}
for $t \in \left[ {kT,\left( {k + 1} \right)T} \right]$ where $T$ is defined as \eqref{T} and $k$ is the number of update times. According to \eqref{cond1}, it's easy to check $\Omega < 1$ and $\mu(t_0+T) < \mu(t_0)$. To decrease $\mu$ by means of multiplying it by the scaling factor $\Omega$, we have $\mu(t) \to 0 $ which also implies ${z_{_{\mathcal T}}}(t) \to 0$.
\end{proof}

\begin{remark}
With the quantization range $\mathcal{M}$, the quantizer obtains $(2\mathcal{M}+1)$ quantization levels. In addition, only $\lceil {{{\log }_2}\left( {2\mathcal{M}} \right)} \rceil $ bits are required while transmitting data at each time interval.
\end{remark}

\begin{remark}
While the initial state is unknown, the open-loop rooming out stage is utilized to guarantee the state of the system can be adequately measured. As the initial states of multi-agent system are generally known for quantizers, we can select a suitable rooming variable $\mu_0$ in advance to keep the system starts in the ellipsoids $R_1$ without the rooming out stage.
\end{remark}

\section{Simulation}
Consider the multi-agent system consisting of a group of $5$ agents associated with a quasi-strongly connected graph as shown in Fig. \ref{span}, where ${e_1},{e_2},{e_3},{e_4} \subset {{\mathcal G}_{_{\mathcal T}}}$ and ${e_5} \subset {{\mathcal G}_{_{\mathcal C}}}$.
\begin{figure}[hbtp]
\centering
{\includegraphics[height=3cm]{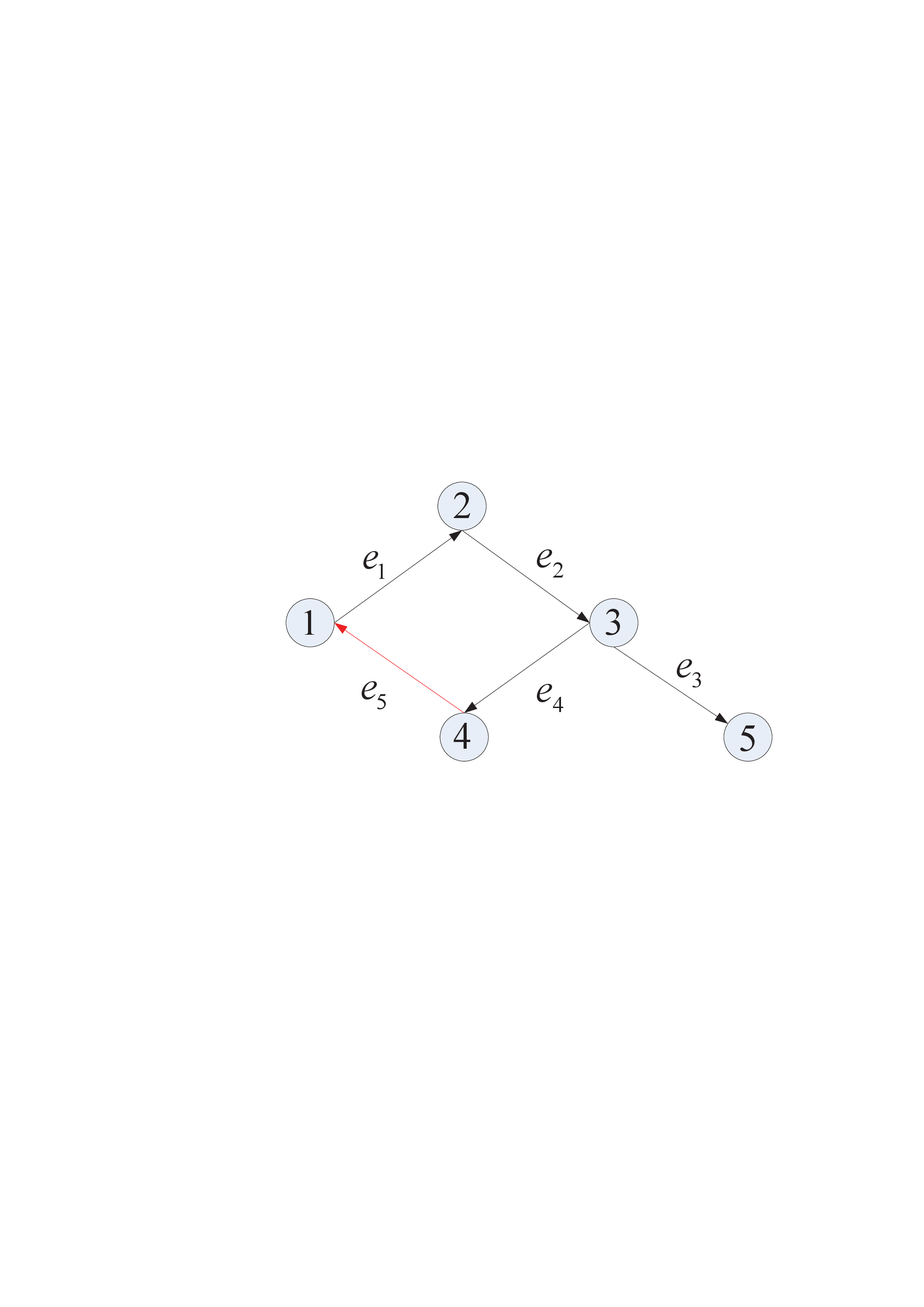}}
\caption{A quasi-strongly connected graph of $5$ agents.}
\label{span}
\end{figure}

\begin{figure*}[hbtp]
\begin{center}
\mbox{
{\includegraphics[width=65mm]{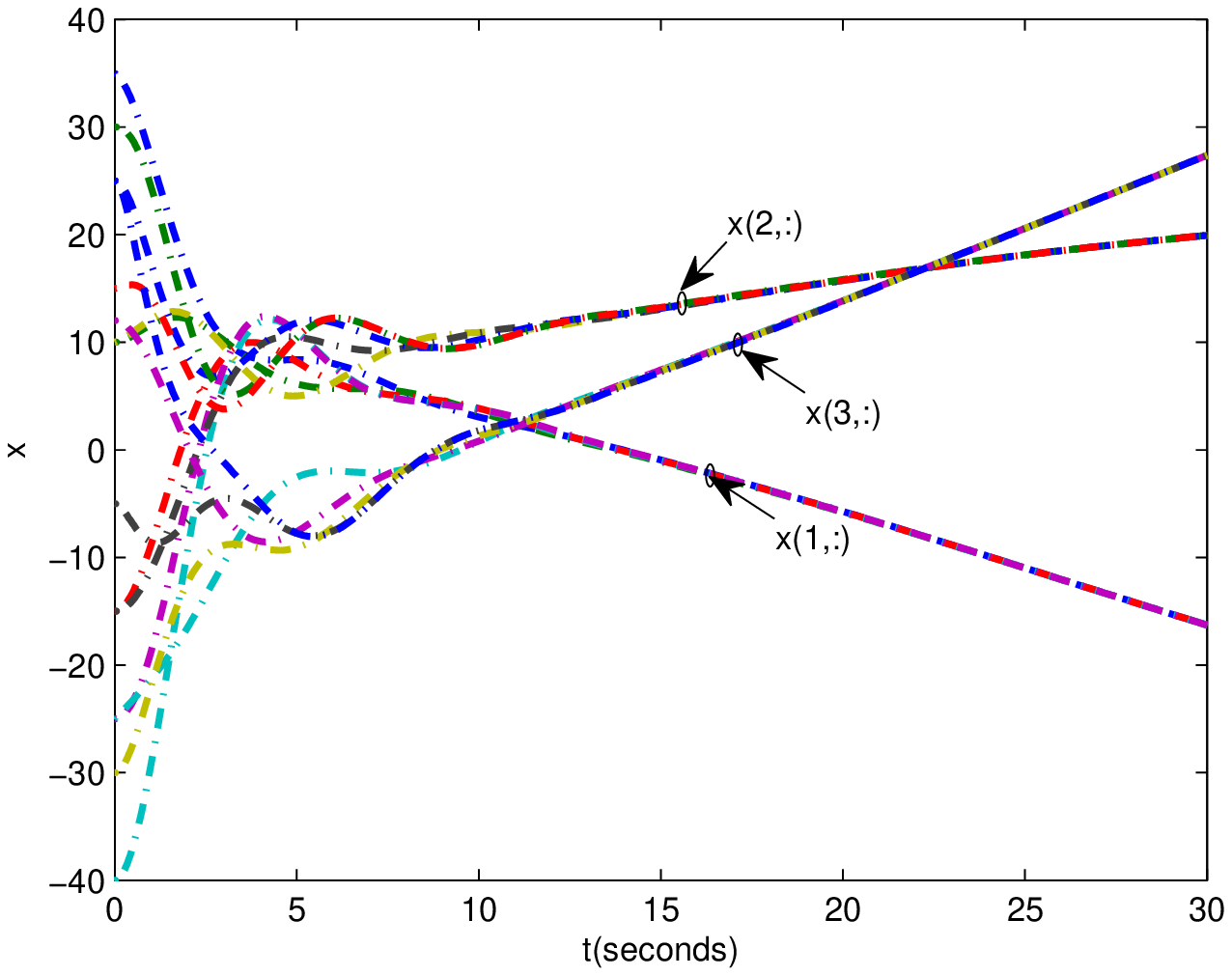}\label{fig:qux}}
{\includegraphics[width=65mm]{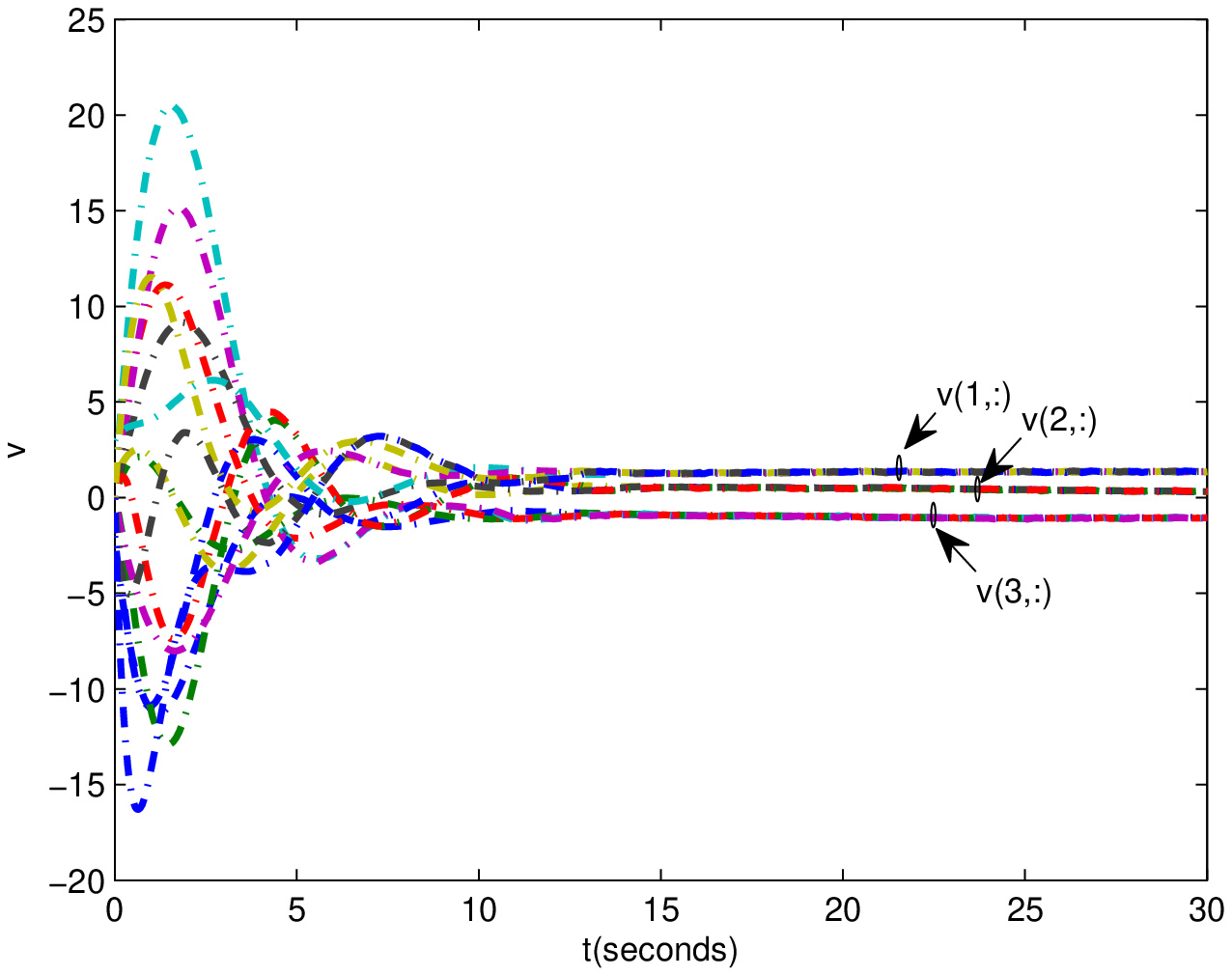}\label{fig:quv}}
}
\mbox{
{\includegraphics[width=65mm]{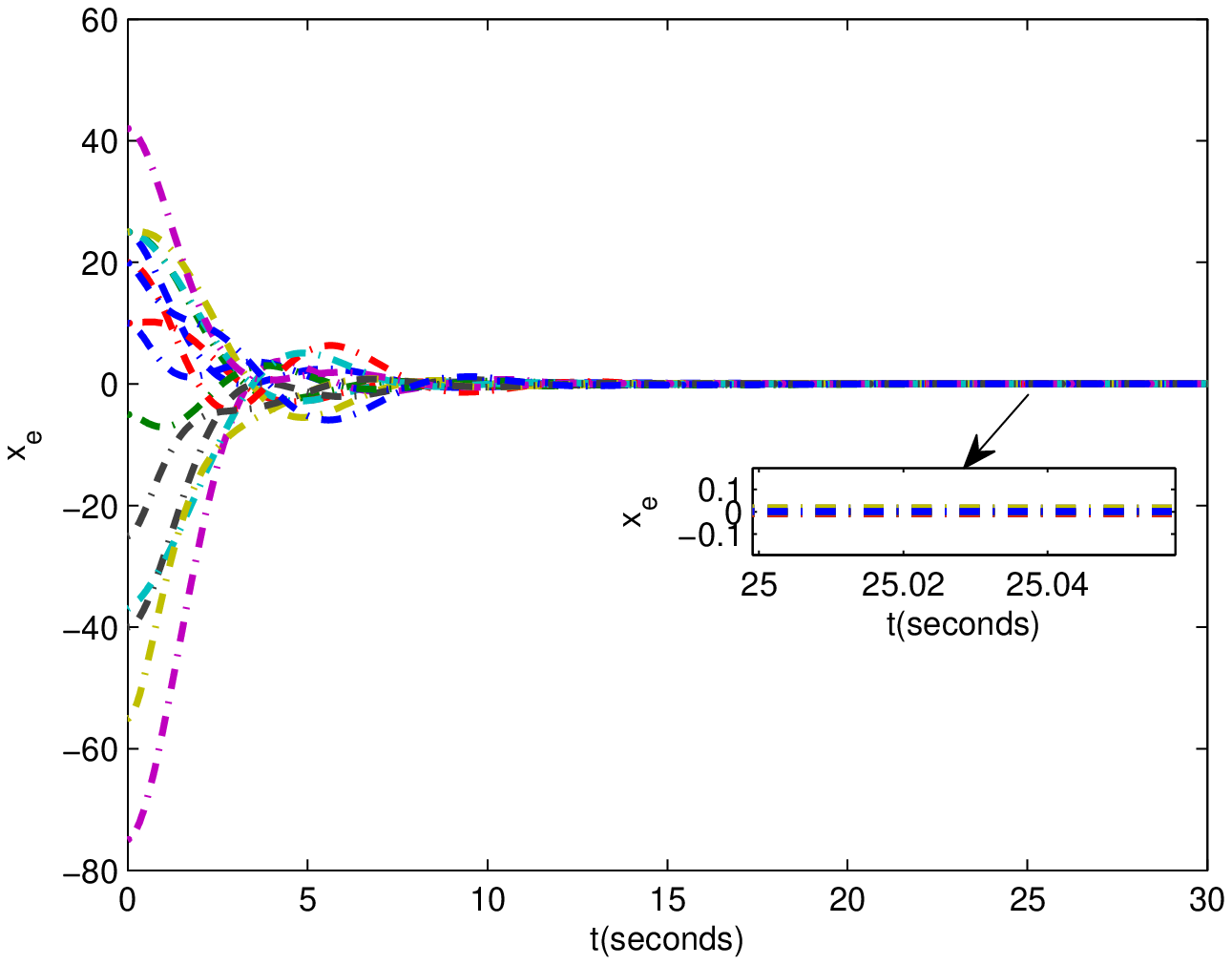}\label{fig:quxe}}
{\includegraphics[width=65mm]{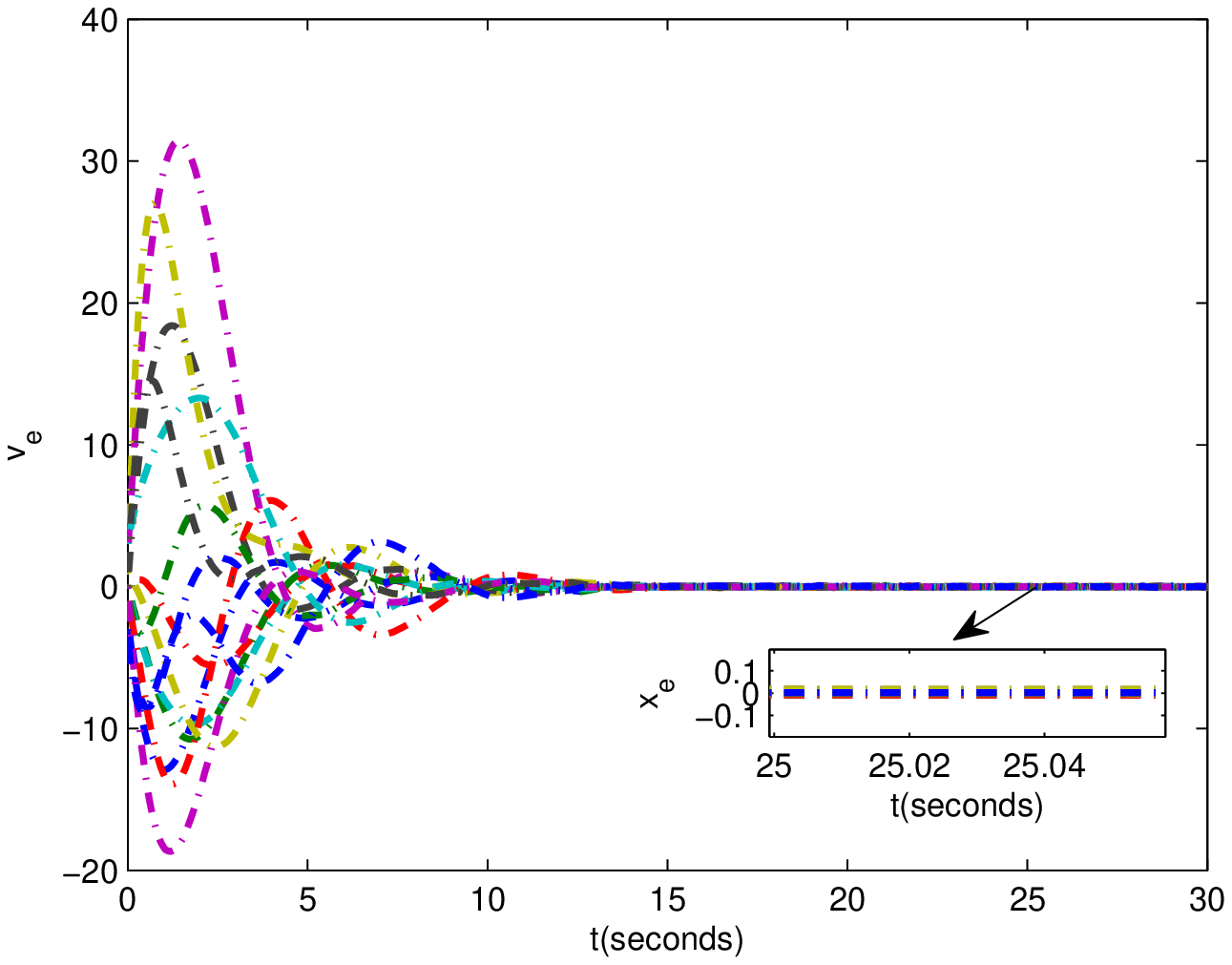}\label{fig:quve}}
}
\caption{Edge agreement under dynamic uniform quantizer.}
\label{fig:qu}
\end{center}
\end{figure*}
The dynamics of the $i$-th agent is described as \eqref{dynam}, in which $x_i\left(t\right), v_i\left(t\right), u_i\left(t\right) \in \rtn^3$. Through a simple calculation, we can obtain
$$
T = {\begin{pmatrix}\begin{smallmatrix}
    -1.00 \cr
    -1.00  \cr
    0.00   \cr
    -1.00   \cr
 \end{smallmatrix}\end{pmatrix} },~
R = {\begin{pmatrix}\begin{smallmatrix}
   1.00   &      0.00     &    0.00      &   0.00  & -1.00 \cr
   0.00   &     1.00     &    0.00      &   0.00 &  -1.00  \cr
    0.00  &       0.00  &  1.00    &     0.00    &     0.00   \cr
    0.00  &       0.00  &       0.00 &   1.00  &  -1.00   \cr
 \end{smallmatrix}\end{pmatrix} }.
$$
Suppose that the weighted diagonal matrix is defined as $\mathcal{W}=diag\{0.12,0.24,0.44,0.43,0.09\}$. By choosing $\sigma = 1.64$, we have
$${\hat L}_e = {\begin{pmatrix}\begin{smallmatrix}
    0.21  &  0.09   &      0.00  &  0.09   \cr
   -0.12  &  0.24   &      0.00  &  0.00  \cr
    0.00  & -0.24  &  0.44   &      0.00   \cr
    0.00  & -0.24  &       0.00   & 0.43   \cr
 \end{smallmatrix}\end{pmatrix} }$$
$$ {\hat L}_{_{\mathcal O}}=  {\begin{pmatrix}\begin{smallmatrix}
    0.12 &  0.00  &  0.00 &  0.00 &  -0.09 \cr
     -0.12 &   0.24 &  0.00  &  0.00  &  0.00 \cr
    0.00 &  -0.24  &  0.44 &  0.00  & -0.00 \cr
     0.00  & -0.24 &  0.00 &   0.43  &  0.00 \cr
 \end{smallmatrix}\end{pmatrix} } .
$$
Solving the Lyapunov equation \eqref{edgelyap} leads to
$$
 {H}=  {\begin{pmatrix}\begin{smallmatrix}
    2.47  &  0.16 &   0.07 &  -0.26 \cr
     0.16 &   2.86 &   0.39 &   0.45 \cr
    0.07  &  0.39  &  1.14 &  -0.01 \cr
     -0.26&    0.45 &  -0.01  &  1.22 \cr
 \end{smallmatrix}\end{pmatrix} }.
$$
Directed calculation yields $\lambda_{max}(\mathcal{P})=8.098$, $\lambda_{min}(\mathcal{P})=0.6157$, $\left\| {{\mathcal P} } \right\| =8.098$ and $\left\| {{\mathcal P}{{\mathcal L}_{_{{\mathcal T}1}}} } \right\|=6.7121$.
\begin{figure}[hbtp]
\begin{center}
{\includegraphics[width=60mm]{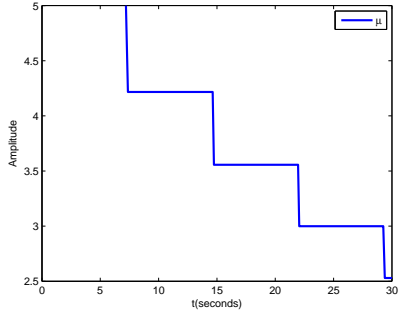}}
\caption{The amplitude of the zooming variable $\mu$.}
\label{fig:qu}
\end{center}
\end{figure}

Consider the quantized protocol \eqref{quantizedpro} with the dynamic uniform quantizer \eqref{dyq}. By choosing $\mathcal{M}= 63$ (i.e., only 7 bits information is required) with $\varepsilon = 0.75$, the condition \eqref{cond1} is satisfied. To ensure the initial condition lies in the ellipsoid $R_1$, $\mu_0$ can be chosen to be 10. The resulted zooming interval of the scheme proposed in this paper is $T = 6.2597s$. The simulation results with $\Delta = 0.1$ are shown in Fig. \ref{fig:qu}, from which we can see that $x_e(t)$ and $v_e(t)$ indeed converge to the equilibrium points asymptotically.


\section{Conclusion}
In this paper, we explored the edge agreement problem of second-order multi-agent system under quantized communication. Based upon the essential edge Laplacian, we derived a model reduction representation of the closed-loop multi-agent system for directed graph. Then, the dynamic quantized communication strategy based on the rooming in-rooming out scheme with finite quantization level was proposed. Through certainty equivalent quantized feedback controller and state transformation, the asymptotic stability of second-order multi-agent system under dynamic quantized effects can be guaranteed.

\section*{Acknowledgements}
This work was supported by the National Natural Science Foundation of China under grant 61403406.

\end{spacing}
\bibliographystyle{elsarticle-num}        
\bibliography{autosam}           
\end{document}